\theoremstyle{plain} 
\newtheorem{theorem}{Theorem}
\newtheorem{definition}[theorem]{Definition}
\newtheorem{proposition}[theorem]{Proposition}
\newcommand*\Strat{\Theta}
\newcommand*\Stratblind{\tilde \Strat}
\newcommand*\stratA{\Strat_1}
\newcommand*\stratB{\Strat_2}
\renewcommand*\K{\mathbbm{K}}
\newcommand*\Real{\mathbbm{R}}
\DeclareMathOperator{\Tr}{Tr} 
\DeclareMathOperator{\Pa}{Pa} 
\DeclareMathOperator{\fPa}{fPa} 
\DeclareMathOperator{\tr}{tr} 
\DeclareMathOperator{\len}{len} 
\newcommand*\bigmid{\mathrel{\big|}}
\newcommand*\tto[1]{\xrightarrow{#1}}
\newcommand*\ie{\textit{i.e.}} 
\newcommand*\eg{\textit{e.g.}}
\newcommand*\cf{\textit{cf.}}
\newcommand*\vblind{\tilde v}
\newcommand*\last{\textup{last}} 
\newcommand{\Round}[1]{\ensuremath{\textup{Round}_{(#1)}}\xspace} 
\begin{document}


\title{Distances for Weighted Transition Systems: \\ Games and
  Properties}

\author{%
  Uli Fahrenberg 
  \institute{%
    IRISA/INRIA%
    \thanks{%
      Most of this work was conducted while this author was still at
      Aalborg University.} \\
    Rennes Cedex \\ France}
  \email{ulrich.fahrenberg@irisa.fr}
  \and
  Claus Thrane \qquad\qquad Kim G.~Larsen
  \institute{
    Department of Computer Science \\ Aalborg University \\
    Denmark}
  \email{\{crt,kgl\}@cs.aau.dk}
 }

\def\titlerunning{Distances for Weighted Transition Systems}
\def\authorrunning{Fahrenberg, Thrane \& Larsen}

\maketitle

\begin{abstract}
  We develop a general framework for reasoning about distances between
  transition systems with quantitative information.  Taking as
  starting point an arbitrary distance on system traces, we show how
  this leads to natural definitions of a linear and a branching
  distance on states of such a transition system.  We show that our
  framework generalizes and unifies a large variety of previously
  considered system distances, and we develop some general properties
  of our distances.  We also show that if the trace distance admits a
  recursive characterization, then the corresponding branching
  distance can be obtained as a least fixed point to a similar
  recursive characterization.  The central tool in our work is a
  theory of infinite path-building games with quantitative objectives.
\end{abstract}

\section{Introduction}

In verification of concurrent and reactive systems, one generally
seeks to assert properties of systems expressed in terms of \emph{sets
  of traces} (or languages) or in terms of \emph{computation trees}.
The language point of view leads to what is generally called
\emph{linear} semantics, whereas the tree point of view leads to
\emph{branching} semantics.  These semantics are the extreme points in
a spectrum containing a number of other useful notions;
see~\cite{Glabbeek01-lbs} for an overview.

As emphasized in \cite{DBLP:conf/fm/HenzingerS06}, working with applications in
complex reactive systems or in embedded systems means that classical
notions of linear and branching equivalence (or inclusion) of
processes 
often need to be extended to accommodate \emph{quantitative}
information. This can be in relation to real-time behavior, resource
usage, or can be probabilistic or stochastic information.
In such a quantitative setting, equivalences and inclusions are
replaced by symmetric or asymmetric \emph{distances} between systems.

This approach of \emph{quantitative analysis} has been taken in
numerous papers by multiple authors, both in the real-time (or
hybrid), in the probabilistic, and in general quantitative settings,
see~\cite{DBLP:conf/qest/DesharnaisLT08,conf/concur/CernyHR10,conf/csl/ChatterjeeDH08,conf/icalp/AlfaroFS04,conf/icalp/AlfaroHM03,journals/tcs/DesharnaisGJP04,FahrenbergLT10,Giacalone90,conf/formats/2005/HenzM05,journals/jlap/ThraneFL10,DBLP:conf/stoc/Kozen83,axiomat}
for a (non-exhaustive) choice of references.  Indeed, the quantitative
approach is also useful in settings \emph{without} quantitative
information in the models, \eg~in~\cite{conf/concur/CernyHR10} various
distances related to implementation correctness of discrete systems
are considered.

The above-mentioned dichotomy between languages and trees persists in
the quantitative setting, where one hence encounters both notions of
\emph{linear} and of \emph{branching} distances.  To the best of our
knowledge, the treatment of those distances, and of the relations
between them, has so far been somewhat ad hoc.  Indeed, the general
approach appears to be to introduce some particular distances which
are relevant for a particular application and then show some useful
properties; in this paper, we try to unify and generalize some of
these approaches.

The present paper is in a sense a follow-up to previous
papers~\cite{FahrenbergLT10,journals/jlap/ThraneFL10} by the same
authors.  In those papers, we introduce and investigate three
different linear and branching distances.  A paper similar in spirit
to these is~\cite{conf/icalp/AlfaroFS04}, which analyses properties of
what we later will call the \emph{point-wise} distance for weighted
Kripke structures.  The starting point for the present paper is then
the observation of similarities between the constructions for
different types of distances, which we here generalize to encompass
all of them and to construct a coherent framework.

In this paper, we take the view that in practical applications, say in
reactive systems, the \emph{system distance} which measures adherence
to the property which we want to verify, will be specific to the
concrete domain of the application. Hence in a general framework like
the one proposed here, its description must be given as an
\emph{input}. A method to obtain the actual system distances, for some
desired level of interaction, is then prescribed by the framework.

In this paper we assume that this system distance input is given as a
\emph{distance on traces}: Given two sequences of executions, one
needs only to define what it means for these sequences to be closely
related to each other.  We show that such a \emph{trace distance}
always gives rise to natural notions both of linear and of branching
distance.

To relate linear and branching distances, we introduce a general
notion of \emph{simulation game with quantitative objectives}.  The
idea of using games for linear and branching equivalences is not
new~\cite{DBLP:conf/banff/Stirling95} and has been used in a
quantitative setting
\eg~in~\cite{conf/csl/ChatterjeeDH08,DBLP:conf/qest/DesharnaisLT08},
but here we explore this idea in its full generality.

One interesting result which we can show in our general framework is
that for all interesting trace distances, the corresponding linear and
branching distance are \emph{topologically inequivalent}.  From an
application point of view this means that corresponding linear and
branching distances (essentially) always measure very different things
and that results about one of them cannot generally be transferred to
the other.  This result -- and indeed also its proof -- is a
generalization of the well-known fact that language inclusion does not
imply simulation to a quantitative setting.

We also show that for the common special case that the trace distance
has a recursive characterization, the associated branching distance
can be obtained as a least fixed point to a similar recursive
characterization.  This is again a generalization of some standard
facts about simulation, but shows that for a large class of branching
distances, characterizations as least fixed points are available.

\subsection*{Acknowledgment}

The authors acknowledge interesting and fruitful discussions on the
topic of this work with Tom Henzinger, Pavol {\v C}ern{\'y} and Arjun
Radhakrishna of IST Austria.

\section{From Trace Distances to System Distances}

Our object of study in this work are general $\K$-weighted transition
systems (to be defined below), where $\K$ is some set of weights.  For
applications, $\K$ may be further specified and admit some extra
structure, but below we just assume $\K$ to be some finite or infinite
set.

\begin{definition}
  A \emph{trace} is an infinite sequence $\big( \sigma_j\big)_{ j=
    0}^\infty$ of elements in $\K$.  The set of all such traces is
  denoted $\K^\omega$.
\end{definition}

Note that we confine our study to \emph{infinite} traces; this is
mostly for convenience, to avoid issues with finite traces of
different length.  All our results are valid when also finite traces
are allowed and the definitions changed accordingly.  We write
$\sigma_j$ for the $j$th element in a trace $\sigma$, and $\sigma^j$
for the trace obtained from $\sigma$ by deleting elements $\sigma_0$
up to $\sigma_{ j- 1}$.

\begin{definition}
  A \emph{$\K$-weighted transition system} (WTS) is a pair $A=( S, T)$
  of sets $S, T$ with $T\subseteq S\times \K\times S$.
\end{definition}

We use the familiar notation $s\tto x s'$ to indicate that $( s, x,
s')\in T$.  Note that $S$ and $T$ may indeed be infinite, also
infinite branching.  For simplicity's sake we shall follow the common
assumption that all our WTS are \emph{non-blocking}, \ie~that for any
state $s\in S$ there is a transition $s\tto x s'$ in $T$.

A \emph{path from $s_0\in S$} in a WTS $( S, T)$ is an infinite
sequence $\big( s_j\tto{ x_j} s_{ j+ 1}\big)_{j= 0}^\infty$ of
transitions in $T$.  The set of such is denoted $\Pa( s_0)$.  We will
in some places also need \emph{finite} paths, \ie~finite sequences
$\big( s_j\tto{ x_j} s_{ j+ 1}\big)_{j= 0}^n$ of transitions; the set
of finite paths from $s_0$ is denoted $\fPa( s_0)$.  For a finite path
$\pi$ as above, we let $\len( \pi)= n$ denote its length and $\last(
\pi)= s_{ n+ 1}$ its last state.  We write $\pi_j= s_j$ for the $( j+
1)$th state and $\tr( \pi)_j$ for the $( j+ 1)$th weight in a finite
or infinite path.

A path $\pi=\big( s_j\tto{ x_j} s_{ j+ 1}\big)_{ j= 0}^\infty$ gives
rise to a trace $\tr( \pi)=\big( x_j\big)_{ j= 0}^\infty$.  The set of
(infinite) \emph{traces from $s_0\in S$} is denoted $\Tr( s_0)=\big\{
\tr( \pi)\bigmid \pi\in \Pa( s_0)\big\}$.

\subsection{Interlude: Hemimetrics}
\label{se:metrics}

Before we proceed, we recall some of the notions regarding asymmetric
metrics which we will be using.  First, a \emph{hemimetric} on a set
$X$ is a function $d: X\times X\to[ 0, \infty]$ which satisfies $d( x,
x)= 0$ for all $x\in X$ and the triangle inequality $d( x, y)+ d( y,
z)\ge d( x, z)$ for all $x, y, z\in X$.

We will have reason to consider two different notions of equivalence
of hemimetrics. Two hemimetrics $d_1$, $d_2$ on $X$ are said to be
\emph{Lipschitz equivalent} if there are constants $m, M\in \Real$
such that
\begin{equation*}
  m\, d_1( x, y)\le d_2( x, y)\le M\, d_1( x, y)
\end{equation*}
for all $x, y\in X$.  Lipschitz equivalent hemimetrics are hence
dependent on each other; intuitively, a property using one hemimetric
can always be approximated using the other.

Another, weaker, notion of equivalence of hemimetrics is the
following: Two hemimetrics $d_1$, $d_2$ on $X$ are said to be
\emph{topologically equivalent} if the topologies on $X$ generated by
the open balls $B_i( x; r)=\{ y\in X\mid d_i( x, y)< r\}$, for $i= 1,
2$, $x\in X$, and $r> 0$, coincide.  Topological equivalence hence
preserves topological notions such as convergence of sequences: If a
sequence $( x_j)$ of points in $X$ converges in one hemimetric, then
it also converges in the other.

It is a standard fact that Lipschitz equivalence implies topological
equivalence.  From an application point-of-view, topological
equivalence is interesting for showing \emph{negative} results;
proving that two hemimetrics are not topologically equivalent can be
comparatively easy, and implies that intuitively, the two hemimetrics
measure very different properties.

\subsection{Examples of Trace Distances}
\label{sec:trace_distances}

The framework we are proposing in this article takes as starting point
a \emph{trace distance} defined on executions of a weighted automaton,
\ie~a hemimetric $d_T: \K^\omega\times \K^\omega\to[ 0, \infty]$.
In this section we introduce a number of different such trace
distances, to show that the framework is applicable to a variety of
interesting examples.

\paragraph{Discrete trace distances.}
\label{ex:lts}

The discrete trace distance on $\K^\omega$ is defined by $d_T( \sigma,
\tau)= 0$ if $\sigma= \tau$ and $d_T( \sigma, \tau)= \infty$
otherwise.  Hence only equality or inequality of traces is measured;
we shall see below that this distance exactly recovers the usual
Boolean framework of trace inclusion and simulation.

If $\K$ comes equipped with a preorder $\mathord\sqsubseteq\subseteq
\K\times \K$ indicating that a label $x\in \K$ may be replaced by any
$y\in \K$ with $x\sqsubseteq y$, as \eg~in~\cite{Thomsen87}, then we
may refine the above distance by instead letting $d_T( \sigma, \tau)=
0$ if $\sigma_j\sqsubseteq \tau_j$ for all $j$ and $d_T( \sigma,
\tau)= \infty$ otherwise.  We will see later that using this trace
distance, we exactly recover the \emph{extended simulation}
of~\cite{Thomsen87}; note that something similar is done
in~\cite{DBLP:journals/dke/MedeirosAW08}. 

\paragraph{Hamming distance.}

If one defines a metric $d$ on $\K$ by $d( x, y)= 0$ if $x= y$ and $d(
x, y)= 1$ otherwise, then the sum $\sum_j d( \sigma_j, \tau_j)$ for
any pair of finite traces $\sigma$, $\tau$ of equal length is
precisely the well-known Hamming distance~\cite{Hamming50}.  For
infinite traces, some technique can be used for providing finite
values for infinite sums; two such techniques are to use \emph{limit
  average} or \emph{discounting}.  We can hence define the
limit-average Hamming distance by $d_T( \sigma, \tau)= \liminf_{ j\to
  \infty} \frac1j \sum_j d( \sigma_j, \tau_j)$, and for a fixed
discounting factor $0\le \lambda< 1$, the discounted Hamming distance
by $d_T( \sigma, \tau)= \sum_j \lambda^j d( \sigma_j, \tau_j)$.

Note that this approach can easily be generalized to other
(hemi)metrics $d$ on $\K$; indeed the discrete trace distances from
above can be recovered using $d( x, y)= 0$ if $x\sqsubseteq y$ and $d(
x, y)= \infty$ otherwise.

\paragraph{Labeled weighted transition systems.}
\label{ex:wlts}

A common example of weighted
systems~\cite{DBLP:conf/formats/BouyerFLMS08,conf/concur/CernyHR10,conf/csl/ChatterjeeDH08,journals/jlap/ThraneFL10,DBLP:conf/concur/Breugel05}
has $\K= \Sigma\times \Real$ where $\Sigma$ is a discrete set of
labels.  Hence $x=( x^\ell, x^w)\in \K$ has $x^\ell\in \Sigma$ as
discrete component and $x^w\in \Real$ as real weight.  A useful trace
distance for this type of systems is the \emph{point-wise distance},
see~\cite{conf/icalp/AlfaroFS04,journals/jlap/ThraneFL10}, given by
$d_T( \sigma, \tau)= \sup_j| \sigma^w_j- \tau^w_j|$ if $\sigma^\ell_j=
\tau^\ell_j$ for all $j$ and $d_T( \sigma, \tau)= \infty$ otherwise.
This measures the biggest individual difference between $\sigma$ and
$\tau$.

Another interesting trace distance in this setting is the
\emph{accumulated distance}~\cite{journals/jlap/ThraneFL10}, where
individual differences in weights are added up.  Again one can use
limit average or discounting for infinite sums; limit-average
accumulating distance is defined by
\begin{equation*}
  d_T( \sigma, \tau)=
  \begin{cases}
    \liminf\limits_{ j\to \infty} \frac1j \sum_j| \sigma^w_j-
    \tau^w_j| &\text{if } \sigma^\ell_j= \tau^\ell_j \text{ for all }
    j \\
    \infty &\text{otherwise}
  \end{cases}
\end{equation*}
and discounted accumulating distance, for a fixed $\lambda< 1$, by
\begin{equation*}
  d_T( \sigma, \tau)=
  \begin{cases}
    \sum_j \lambda^j| \sigma^w_j- \tau^w_j| &\text{if }
    \sigma^\ell_j= \tau^\ell_j \text{ for all } j \\
    \infty &\text{otherwise}
  \end{cases}
\end{equation*}
This is indeed a generalization of the Hamming distance above, setting
$d\big(( x, w),( y, v)\big)=| w- v|$ if $x= y$ and $d\big(( x, w),( y,
v)\big)= \infty$ otherwise.

Also of interest is the \emph{maximum-lead distance}
from~\cite{conf/formats/2005/HenzM05}, where the individual weights
are added up and one is concerned with the maximal difference between
the accumulated weights.  The definition is
\begin{equation*}
  d_T( \sigma, \tau)=
    \begin{cases}
      \sup_j \bigl| \sum_{i= 0}^j \sigma^w_i - \sum_{i= 0}^j
      \tau^w_i\bigr| &\text{if }
      \sigma^\ell_j= \tau^\ell_j \text{ for all } j \\
      \infty &\text{otherwise}
    \end{cases}  
\end{equation*}

\subsection{Simulation Games}
\label{se:wsg}

In this central section we introduce the game which we will use to
define both the linear and the branching distances.  We shall use some
standard terminology and constructions from game theory here; for a
good introduction to the subject see \eg~\cite{Ferguson}.

Let $A=( S, T)$ be a weighted transition system with $s, t\in
S$ and $d_T: \K^\omega\times \K^\omega\to[ 0, \infty]$ a trace
distance. Using $A$ as a game graph, the \emph{simulation game} played
on $A$ from $( s, t)$ is an infinite turn-based two-player game,
where we denote the strategy space of Player~$i$ by $\Strat_i$ and the
utility function of Player~1 by $u: \Strat_1 \times \Strat_2 \to[ 0,
\infty]$.  As usual $u$ determines the pay-off of Player~1; we will
not use pay-offs for Player~2 here.

The game moves along transitions in $A$ while building a pair of paths
extending from $s$ and $t$, according to the strategies of the
players.  In the terminology
of~\cite{DBLP:journals/tcs/Bodlaender93,DBLP:journals/tcs/FraenkelS93}
we are playing a \emph{partisan path-forming game}.

A \emph{configuration} of the game is a pair of finite paths $( \pi_1,
\pi_2)\in \fPa( s)\times \fPa( t)$ (\ie~the history) which are
consecutively updated by the players 
as the game progresses.
%
The players must play according to a strategy of the following types: 
\begin{itemize}
\item $\Strat_1= T^{ \fPa( s)\times \fPa( t)}$, the set of
  mappings from pairs of finite paths to transitions, with the
  additional requirement that for all $\theta_1\in \Strat_1$ and $(
  \pi_1, \pi_2)\in \fPa( s)\times \fPa( t)$, $\theta_1( \pi_1,
  \pi_2)=( \last( \pi_1), x, s')$ for some $x\in \K$, $s'\in S$.  This
  is the set of Player-1 strategies which observe the complete
  configuration.
\item Similarly, $\Strat_2= T^{ \fPa( s)\times \fPa( t)}$ with the
  additional requirement that for all $\theta_2\in \Strat_2$ and $(
  \pi_1, \pi_2)\in \fPa( s)\times \fPa( t)$, $\theta_2( \pi_1,
  \pi_2)=( \last( \pi_2), y, t')$ for some $y\in \K$, $t'\in S$.
\item $\Stratblind_1= T^{ \fPa( s)}$, the set of \emph{blind}
  Player-1 strategies which cannot observe the moves of Player~2.
  (Blind Player-2 strategies can be defined similarly, but we will not
  need those here.)  It is convenient to identify $\Stratblind_1$ with
  the subset of $\Strat_1$ of all strategies $\theta_1$ which satisfy
  $\theta_1( \pi_1, \pi_2)= \theta_1( \pi_1, \pi_2')$ for all
  $\pi_1\in \fPa( s)$, $\pi_2, \pi_2'\in \fPa( t)$.
\item In the proof of Proposition~\ref{prop:hemi} we will also need
  Player-2 strategies with additional memory.  Such a strategy is a
  mapping $\fPa( s)\times \fPa(t)\times M\to T\times M$, where $M$
  is a set used as memory.
\end{itemize}

Given a game with configuration $( \pi_1, \pi_2)$, a \emph{round} is
played, according to a \emph{strategy profile} (\ie~a pair of
strategies) $( \theta_1, \theta_2)\in \Strat_1\times \Strat_2$, by
first updating $\pi_1$ according to $\theta_1$ and then updating the
resulting configuration according to $\theta_2$.  Hence we define
\begin{equation*}
  \Round{ \theta_1, \theta_2}( \pi_1, \pi_2)=\big( \pi_1\cdot \theta_1(
  \pi_1, \pi_2), \pi_2\cdot \theta_2( \pi_1\cdot \theta_1( \pi_1, \pi_2),
  \pi_2)\big)
\end{equation*}
where $\cdot$ denotes sequence concatenation.

A strategy profile $( \theta_1, \theta_2)\in \Strat_1\times \Strat_2$
inductively determines an infinite sequence $\smash{\big(( \pi_1^j,
  \pi_2^j)\big)_{ j= 0}^\infty}$ of configurations given by $(
\pi_1^0, \pi_2^0)=( s, t)$ and $( \pi^j_1, \pi^j_2)= \Round{
  \theta_1, \theta_2}( \pi^{j-1}_1, \pi^{j-1}_2)$ for $j\ge 1$.  The
paths in this sequence satisfy $\pi_i^j\sqsubseteq \pi_i^{ j+ 1}$,
where $\sqsubseteq$ denotes prefix ordering, hence the limits $\pi_1(
\theta_1, \theta_2)= \lim_{ j\to \infty} \pi_1^j\in \Pa( s)$,
$\pi_2( \theta_1, \theta_2)= \lim_{ j\to \infty} \pi_2^j\in \Pa( t)$
exist (as infinite paths).
We define the utility function $u$ as
\begin{equation*}
  u( \theta_1, \theta_2)= d_T\big( \tr( \pi_1( \theta_1, \theta_2)), \tr(
  \pi_2( \theta_1, \theta_2))\big)
\end{equation*}
This determines the pay-off to Player~1 when the game is played
according to strategies $\theta_1$, $\theta_2$.  Note again that the
utility function for Player~2 is left undefined; especially we make no
claim as to the game being zero-sum.

The \emph{value} of game on $A$ from $( s, t)$ is defined to be
the optimal Player-1 pay-off
\begin{equation*}
  v( s, t)= \adjustlimits \sup_{ \theta_1 \in \Strat_1} \inf_{
    \theta_2 \in \Strat_2} u( \theta_1, \theta_2)
\end{equation*}
Observe that the game is \emph{asymmetric}; in general, $v( s,
t)\ne v( t, s)$.  

A strategy $\hat \theta_1$ for Player~1 is said to be \emph{optimal}
if it realizes the supremum above, \ie~whenever $\inf_{ \theta_2\in
  \Theta_2} u( \hat \theta_1, \theta_2)= \sup_{ \theta_1 \in \Strat_1}
\inf_{ \theta_2 \in \Strat_2} u( \theta_1, \theta_2)$. The strategy is
called \emph{$\epsilon$-optimal} for some $\epsilon> 0$ provided that
$\inf_{ \theta_2\in \Theta_2} u( \hat \theta_1, \theta_2)\ge \sup_{
  \theta_1 \in \Strat_1} \inf_{ \theta_2 \in \Strat_2} u( \theta_1,
\theta_2)- \epsilon$.  Note that $\epsilon$-optimal strategies always
exist for any $\epsilon> 0$, whereas optimal strategies may not.

We also recall that the game is said to be \emph{determined} if the
sup and inf above can be interchanged, \ie~if $v( s, t)= \inf_{
  \theta_2 \in \Strat_2} \sup_{ \theta_1 \in \Strat_1} u( \theta_1,
\theta_2)$.  Intuitively, the game is determined if there also exist
$\epsilon$-optimal Player-2 strategies for any $\epsilon> 0$ which
realize the value of the game (up to $\epsilon$) independent of the
strategy Player~1 might choose.

The \emph{1-blind value} of the game is defined to be
\begin{equation*}
  \vblind( s, t)= \adjustlimits \sup_{ \theta_1 \in \Stratblind_1}
  \inf_{ \theta_2 \in \Strat_2} u( \theta_1, \theta_2)
\end{equation*}

\subsection{Example: Discrete Trace Distance}
\label{se:discrete}

It may be instructive to apply the above simulation game in the
context of the discrete trace distance $d_T( \sigma, \tau)= 0$ if
$\sigma= \tau$, $d_T( \sigma, \tau)= \infty$ otherwise, from
Section~\ref{sec:trace_distances}.  In this case, the game has value
$v(s,t) = 0$ if and only if, for every $\theta_1\in\Strat_1$ there
exist a $\theta_2\in \Strat_2$ which in each round $i \ge 0$ of the
game, in configuration ($\pi^i_1,\pi^i_2$), maps
$\theta_2(\pi^i_1,\pi^i_2) = (\last(\pi^i_2),x,t_{i+1})$ whenever
$\theta_1(\pi^i_1,\pi^i_2) = (\last(\pi^i_1),x,s_{i+1})$.  Otherwise,
$v( s, t)= \infty$.

Hence we have $v( s, t)= 0$ if $t$ simulates $s$ in the sense
of~\cite{milner89}, and $v( s, t)= \infty$ otherwise.  In other
words, for discrete trace distance the game reduces to the standard
simulation game of~\cite{DBLP:conf/banff/Stirling95}.

Likewise, the blind value $\vblind(s,t) = 0$ if and only if every
$\theta_1 \in \Stratblind_1$ and corresponding path $\pi_1$ has a
match $\theta_2\in \Strat_2$ where configuration ($\pi^i_1,\pi^i_2$),
of round $i \geq 0$ facilitates $\theta_2(\pi^i_1,\pi^i_2) =
(\last(\pi^i_2),x,t_{i+1})$ whenever $\theta_1(\pi^i_1,\pi^i_2) =
(\last(\pi^i_1),x,s_{i+1})$.  Hence $\vblind( s, t)= 0$ if $\Tr(
s)\subseteq \Tr( t)$; we recover standard trace inclusion.

\subsection{Linear and Branching Distance}

We can now use the game introduced in Section~\ref{se:wsg} to define
linear and branching distance:

\begin{definition}
  \label{de:branch}
  Let $A=( S, T)$ be a WTS and $s, t\in S$.
  \begin{itemize}
  \item The \emph{linear distance} from $s$ to $t$ is the 1-blind
    value $d_L( s, t)= \vblind( s, t)$.
  \item The \emph{branching distance} from $s$ to $t$ is the value
    $d_B( s, t)= v( s, t)$.
  \end{itemize}
\end{definition}

We proceed to show that the distances so defined are hemimetrics on
$S$, \cf~the proof of Theorem~1 in~\cite{conf/concur/CernyHR10}.
For linear distance, this also follows from
Theorem~\ref{thm:dl-formula} below, and we only include the proof for
reasons of exposition.  For branching distance, we have to assume in
the proof below that the simulation game is \emph{determined};
currently we do not know whether this assumption can be lifted.

\begin{proposition}
  \label{prop:hemi}
  Linear distance $d_L$ is a hemimetric on $S$, and if the simulation
  game is determined, so is $d_B$.
\end{proposition}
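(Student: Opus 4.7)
The plan is to verify reflexivity and the triangle inequality separately, for both $d_L$ and $d_B$.

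Reflexivity, \ie~$d_L(s,s)= d_B(s,s)= 0$, comes from Player-2's copycat strategy in the game on $(s,s)$: having observed Player-1 extend $\pi_1$ by a transition $(\last( \pi_1), x, s')$, Player-2 plays the same transition to extend $\pi_2$.  The two paths then coincide and $u(\theta_1, \theta_2)= d_T(\sigma, \sigma)= 0$ for every $\theta_1$, so the supremum over $\theta_1$ vanishes.

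For the triangle inequality $d(s, u)\le d(s, t)+ d(t, u)$, the common idea is that Player-2 in the $(s,u)$-game routes its response through an intermediate virtual play in $t$.  Fix $\epsilon> 0$.  In the linear case, any blind $\theta_1\in \Stratblind_1$ in the $(s,u)$-game determines a predetermined path $\pi_s\in \Pa( s)$, independent of Player-2.  Regard $\pi_s$ as a blind Player-1 strategy $\theta_1^{st}$ in the $(s,t)$-game and pick $\theta_2^{st}$ with $u( \theta_1^{st}, \theta_2^{st})\le d_L(s,t)+ \epsilon$; since $\theta_1^{st}$'s moves are fixed, this yields a fully determined $\pi_t\in \Pa(t)$.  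Iterate once more with $\pi_t$ in the role of a blind Player-1 strategy in the $(t,u)$-game to obtain $\pi_u\in \Pa(u)$ with $d_T(\tr(\pi_t), \tr(\pi_u))\le d_L(t,u)+ \epsilon$.  Defining $\theta_2$ in the $(s,u)$-game to replay the moves of $\pi_u$ along the (fixed) play of $\theta_1$, the triangle inequality of $d_T$ gives $u( \theta_1, \theta_2)\le d_L(s,t)+ d_L(t,u)+ 2 \epsilon$; let $\epsilon\to 0$ and supremize over $\theta_1$.

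The branching case is where the main obstacle appears: Player-1 in the $(s,u)$-game may react to Player-2's moves, so no a~priori fixed intermediate $\pi_t$ is available.  This is exactly where the determinacy hypothesis is needed, as it yields $\epsilon$-optimal Player-2 strategies $\hat \theta_2^{st}$, $\hat \theta_2^{tu}$ whose suboptimality bounds hold uniformly over \emph{all} Player-1 choices.  Given $\theta_1$ in the $(s,u)$-game, I would construct Player-2's response $\theta_2$ on the fly, using memory $M= \fPa(t)$ to record a running intermediate path $\pi_t$ from~$t$: starting from the trivial memory, after Player-1's move extends $\pi_s$ to $\pi_s'$, apply $\hat \theta_2^{st}$ to $(\pi_s', \pi_t)$ to obtain the next transition of $\pi_t$, update the memory to the extended path $\pi_t'$, then apply $\hat \theta_2^{tu}$ to $(\pi_t', \pi_u)$ to compute Player-2's actual move in the $(s,u)$-game.

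Running this composite strategy against $\theta_1$ produces three infinite paths $\pi_s, \pi_t, \pi_u$ such that $(\pi_s, \pi_t)$ is the play in the $(s,t)$-game of the induced Player-1 strategy (mirroring $\theta_1$ at the appropriate configurations) against $\hat \theta_2^{st}$, and $(\pi_t, \pi_u)$ is the analogous play in the $(t,u)$-game against $\hat \theta_2^{tu}$.  Uniform $\epsilon$-optimality then gives $d_T( \tr( \pi_s), \tr( \pi_t))\le d_B(s,t)+ \epsilon$ and $d_T( \tr( \pi_t), \tr( \pi_u))\le d_B(t,u)+ \epsilon$, and the triangle inequality of $d_T$ closes the argument upon letting $\epsilon\to 0$ and supremizing over $\theta_1$.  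The memory-augmented $\theta_2$ is identifiable with an ordinary element of $\Strat_2$ by absorbing the memory into the history-dependence, so the argument stays within the stated game framework.
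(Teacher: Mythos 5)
Your proposal is correct and follows essentially the same route as the paper's own proof: a copycat Player-2 strategy for reflexivity, a ``chase of strategies'' through the intermediate state for the linear triangle inequality, and, for the branching case, determinacy to obtain uniformly $\epsilon$-optimal Player-2 strategies composed via a memory component in $\fPa(t)$ that tracks the intermediate path. The paper even explicitly introduces memory-augmented Player-2 strategies for exactly this purpose, so your closing remark about absorbing the memory into the history-dependence is a harmless extra observation.
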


\begin{proof}
  Non-negativity of $d_L$ and $d_B$ follow directly from the
  non-negativity of $d_T$.  To prove that $d_L(s,s) = d_B(s,s) = 0$
  for all $s \in S$, given any strategy $\theta_1 \in \stratA$, we
  construct a Player-2 strategy $\theta_2 \in \stratB$ that mimics
  $\theta_1$ and attains the game value of $0$, as follows:
  \begin{equation*}
    \theta_2(\pi',\pi) =
    \begin{cases}
      \theta_1(\pi,\pi) &\text{if } \pi' = \pi \cdot \theta_1(\pi,\pi)
      \\
      (\last(\pi),y,s') &\text{for some } (\last(\pi),y,s') \in T
      \text{ otherwise}
    \end{cases}
  \end{equation*}
  It can be seen easily that the paths constructed by both players are
  the same, \ie~$u(\pi_1,\pi_2) = d_T(\tau,\tau)$ for some $\tau \in
  \Pa(s_0)$. Therefore, $u(\pi_1,\pi_2) = 0$ as $d_T$ is a hemimetric,
  whence $d_L(s,s) = d_B(s,s) = 0$.

  We are left with showing that $d_L$ and $d_B$ obey the triangle
  inequality.  For linear distance, let $s_1, s_2, s_3\in S$ and write
  $\Strat_k^{ i, j}$ ($\Stratblind_k^{ i, j}$) for the set of (blind)
  Player-$k$ strategies in the simulation game computing $d_L( s_i,
  s_j)$, for $i, j\in\{ 1, 2, 3\}$ and $k\in\{ 1, 2\}$.  Let
  $\epsilon> 0$.
  It might be beneficial to look at Figure~\ref{fig:strategychase} to
  see the ``chase of strategies'' we will be conducting.

  \begin{figure}[tp]
    \centering
    \begin{tikzpicture}[shorten >=1pt, auto, initial text=, scale=.2]
      \node (s1) {\normalsize $s_1$};
      \node (pi113) [above of=s1] {$\theta_1^{ 1, 3}$};
      \node (pi112) [below left of=s1] {$\theta_1^{ 1, 2}$};
      \node (s2) [below right of=s1, xshift=3em, yshift=-3em]
      {\normalsize $s_2$};
      \node (pi212) [below left of=s2] {$\theta_2^{ 1, 2}$};
      \node (pi123) [below right of=s2] {$\theta_1^{ 2, 3}$};
      \node (s3) [above right of=s2, xshift=3em, yshift=3em]
      {\normalsize $s_3$};
      \node (pi213) [above of=s3] {$\theta_2^{ 1, 3}$};
      \node (pi223) [below right of=s3] {$\theta_2^{ 2, 3}$};
      \draw (s1) -- 
      (s2);
      \draw (s1) -- 
      (s3);
      \draw (s2) -- 
      (s3);
      \draw [->] (pi113) to [out=210, in=90] (pi112);
      \draw [->] (pi112) to [out=270, in=170] (pi212);
      \draw [->] (pi212) to [out=350, in=190] (pi123);
      \draw [->] (pi123) to [out=10, in=270] (pi223);
      \draw [->] (pi223) to [out=90, in=330] (pi213);
    \end{tikzpicture}
    \caption{%
      \label{fig:strategychase} Construction of strategies in the
      proof of Proposition~\ref{prop:hemi}}
  \end{figure}
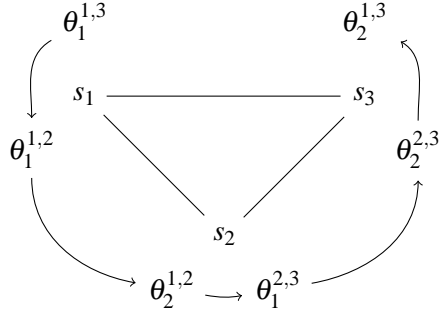

  Choose a blind Player-1 strategy $\theta_1^{ 1, 3}\in
  \Stratblind_1^{ 1, 3}$.  Blind strategies correspond to choosing a
  path, so let $\pi_1\in \Pa( s_1)$ be the path chosen by $\theta_1^{
    1, 3}$. This path in turn corresponds to a blind Player-1 strategy
  $\theta_1^{ 1, 2}\in \Stratblind_1^{ 1, 2}$.

  Let $\theta_2^{ 1, 2}\in \Strat_2^{ 1, 2}$ be a Player-2 strategy
  for which $u(\theta_1^{ 1, 2},\theta_2^{1, 2}) < d_L(s_1,s_2) +
  \frac{\epsilon}{2}$.  Write $\pi_2 \in \Pa(s_2)$ for the path
  constructed by the strategy profile $(\theta_1^{ 1, 2},\theta_2^{ 1,
    2})$, and let $\theta_1^{ 2, 3}\in \Stratblind_1^{ 2, 3}$ be a
  blind Player-1 strategy which constructs $\pi_2$.

  Let $\theta_2^{ 2, 3}\in \Strat_2^{ 2, 3}$ ensure $u(\theta_1^{ 2,
    3},\theta_2^{2, 3}) < d_L(s_2,s_3) + \frac{\epsilon}{2}$.  Write
  $\pi_3 \in \Pa(s_3)$ for the path constructed by the strategy
  profile $(\theta_1^{ 2, 3},\theta_2^{ 2, 3})$, and let $\theta_2^{
    1, 3}\in \Strat_2^{ 1, 3}$ be a strategy which constructs $\pi_3$.
  For the strategy profile $(\theta_1^{ 1, 3}, \theta_2^{ 1, 3})$ in
  $G_{ 1, 3}$, the paths constructed are $\pi_1\in \Pa( s_1)$ and
  $\pi_3\in \Pa( s_3)$.  Hence we have
  \begin{equation}
    \label{eq:triangle}
    \begin{split}
      \smash[b]{\inf_{\theta_2 \in \Strat_2^{ 1, 3}}} u(\theta_1^{ 1,
        3},\theta_2)\le
      u(\theta_1^{ 1, 3},\theta_2^{ 1, 3}) &= d_T(\pi_1,\pi_3) \\
      &\le d_T(\pi_1,\pi_2) + d_T(\pi_2,\pi_3) \\
      &= u(\theta_1^{ 1, 2}, \theta_2^{ 1, 2}) + u(\theta_1^{ 2,
        3},\theta_2^{ 2, 3}) \\ 
      &\le d_L(s_1,s_2) + d_L(s_2,s_3) + \epsilon \\
    \end{split}
  \end{equation}

  As $\theta_1^{ 1, 3}\in \Stratblind_1^{ 1, 3}$ was chosen arbitrarily,
  we have
  \begin{equation*}
    \adjustlimits \sup_{ \theta_1\in \Stratblind_1^{1, 3}} \inf_{
      \theta_2\in \Strat_2^{ 1, 3}} u( \theta_1, \theta_2)\le d_L(s_1,s_2) +
    d_L(s_2,s_3) + \epsilon
  \end{equation*}
  and as also $\epsilon$ was chosen arbitrarily, $d_L(s_1,s_3)\le
  d_L(s_1,s_2) + d_L(s_2,s_3)$.

  \smallskip%
  For branching distance, we cannot construct the paths in a one-shot
  manner as above, as the transitions chosen by Player~1 may depend on
  the history of the play.  Let again $\epsilon> 0$; assuming that the
  simulation game is determined, we can choose Player-2 strategies
  $\theta_2^{ 1, 2}\in \Theta_2^{ 1, 2}$, $\theta_2^{ 2, 3}\in
  \Theta_2^{ 2, 3}$ for which $\sup_{ \theta_1\in \Theta_1^{ 1, 2}} u(
  \theta_1, \theta_2^{ 1, 2})< d_B( s_1, s_2)+ \frac\epsilon2$ and
  $\sup_{ \theta_1\in \Theta_1^{ 2, 3}} u( \theta_1, \theta_2^{ 2,
    3})< d_B( s_2, s_3)+ \frac\epsilon2$.  Intuitively, we will use
  these strategies to allow Player~2 to find replying moves to
  Player-1 moves in the game computing $d_B( s_1, s_2)$ by using the
  reply given by $\theta_2^{ 2, 3}$ to the reply given by $\theta_2^{
    1, 2}$.  Hence we still follow the proof strategy depicted in
  Figure~\ref{fig:strategychase}, but now for individual moves.

  The strategy $\theta_2^{ 1, 3}$ uses a finite path $m= \pi_2\in
  \fPa( s_2)$ as memory and is defined by
  \begin{equation*}
    \theta_2^{ 1, 3}( \pi_1, \pi_3)( \pi_2)= \theta_2^{ 2, 3}\big(
    \pi_2\cdot \theta_2^{ 1, 2}( \pi_1, \pi_2), \pi_3\big)
  \end{equation*}
  with memory update $m( \pi_1, \pi_3)( \pi_2)= \pi_2\cdot \theta_2^{
    1, 2}( \pi_1, \pi_2)$.  The initial memory for $\theta_2^{ 1, 3}$
  is set to be the empty path, hence as the game progresses, a path
  $\pi_2\in \Pa( s_2)$ is constructed.  

  Now choose some $\theta_1^{ 1, 3}\in \Theta_1^{ 1, 3}$, and let
  $\pi_1\in \Pa( s_1)$ and $\pi_3\in \Pa( s_3)$ be the paths
  constructed by the strategy profile $( \theta_1^{ 1, 3}, \theta_2^{
    1, 3})$.  If $\pi_2\in \Pa( s_2)$ is the corresponding memory
  path, then the pair $( \pi_1, \pi_2)$ is constructed by the strategy
  profile $( \theta_1^{ 1, 3}, \theta_2^{ 1, 2})$ and the pair $(
  \pi_2, \pi_3)$ by the profile $( \theta_2^{ 1, 2}, \theta_2^{ 2,
    3})$.  Hence we can use the exact same reasoning as
  in~\eqref{eq:triangle} to conclude that
  \begin{equation*}
    \inf_{\theta_2 \in \Strat_2^{ 1, 3}} u( \theta_1^{ 1,
      3}, \theta_2)\le d_B( s_1, s_2)+ d_B( s_2, s_3)+ \epsilon
  \end{equation*}
  and hence $d_B(s_1,s_3)\le d_B(s_1,s_2) + d_B(s_2,s_3)$.
\end{proof}

\subsection{Properties}

The following general result confirms that, regardless of the trace
distance chosen, the linear distance is always bounded above by the
branching distance.  In the context of the \emph{discrete} trace
distance from Section~\ref{sec:trace_distances}, this specializes to
the well-known fact that simulation implies language inclusion.

\begin{theorem}
  \label{thm:bound}
  For any $s, t\in S$, we have $d_L( s, t)\le d_B( s, t)$.
\end{theorem}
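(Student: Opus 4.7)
The plan is to observe that this inequality is essentially immediate from the definitions and the explicit identification, made in Section~\ref{se:wsg}, of $\Stratblind_1$ as a \emph{subset} of $\Strat_1$. Recall that a blind Player-1 strategy was identified with a general Player-1 strategy $\theta_1$ satisfying $\theta_1(\pi_1, \pi_2) = \theta_1(\pi_1, \pi_2')$ for all $\pi_1 \in \fPa(s)$ and $\pi_2, \pi_2' \in \fPa(t)$. Hence $\Stratblind_1 \subseteq \Strat_1$.

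First I would spell out both quantities directly from Definition~\ref{de:branch}:
\begin{equation*}
  d_L(s, t) = \adjustlimits \sup_{\theta_1 \in \Stratblind_1} \inf_{\theta_2 \in \Strat_2} u(\theta_1, \theta_2),
  \qquad
  d_B(s, t) = \adjustlimits \sup_{\theta_1 \in \Strat_1} \inf_{\theta_2 \in \Strat_2} u(\theta_1, \theta_2).
\end{equation*}
Since the two expressions differ only in that the outer supremum is taken over $\Stratblind_1$ in the linear case and over the larger set $\Strat_1$ in the branching case, while the inner $\inf$ and the utility function $u$ are the same, the conclusion $d_L(s,t) \le d_B(s,t)$ follows from the elementary fact that a supremum over a subset is bounded above by the supremum over the ambient set.

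I do not anticipate any real obstacle here: the proof is essentially a bookkeeping observation that Player~1 can only lose expressive power by restricting attention to blind strategies, and hence the optimal blind pay-off cannot exceed the optimal unrestricted pay-off. The only subtlety worth pointing out is that the inclusion $\Stratblind_1 \subseteq \Strat_1$ is not a triviality of notation but is made legitimate by the explicit identification given earlier in Section~\ref{se:wsg}; under that identification, the function $u(\theta_1, \theta_2)$ takes the same value whether $\theta_1$ is viewed as a blind strategy or as the corresponding element of $\Strat_1$, so no reinterpretation of $u$ is required when passing from one supremum to the other.
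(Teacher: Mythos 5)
Your proof is correct and is essentially identical to the paper's own argument: both observe that $\Stratblind_1\subseteq\Strat_1$ under the stated identification and conclude by monotonicity of the supremum under set inclusion. No issues.
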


\begin{proof}
  Any Player-1 strategy in $\Stratblind_1$ is also in $\Strat_1$, hence
  \begin{equation*}
    \adjustlimits \sup_{ \theta_1 \in \Stratblind_1} \inf_{ \theta_2
      \in \Strat_2} u( \theta_1, \theta_2)\le%
    \adjustlimits \sup_{ \theta_1\in \Strat_1} \inf_{ \theta_2\in
      \Strat_2} u( \theta_1, \theta_2)
  \end{equation*}
\end{proof}

The game definition of linear distance yields the following explicit
formula.  Note the resemblance of this to the well-known Hausdorff
construction for lifting a metric on a set to its set of subsets.

\begin{theorem}
  \label{thm:dl-formula}
  For all $s, t\in S$ we have
  \begin{equation*}
    d_L( s, t)= \adjustlimits \sup_{ \sigma\in \Tr( s)} \inf_{ \tau\in
      \Tr( t)} d_T( \sigma, \tau)
  \end{equation*}
\end{theorem}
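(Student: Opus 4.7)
The plan is to unfold the definition of $d_L(s,t) = \vblind(s,t)$ and translate each quantifier over strategies into a quantifier over traces, using the fact that blind Player-1 strategies have no dependence on Player-2's moves.

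First I would establish a correspondence between blind Player-1 strategies $\theta_1 \in \Stratblind_1$ and paths in $\Pa(s)$. Since such a $\theta_1$ depends only on $\pi_1 \in \fPa(s)$, the inductive construction $\pi_1^0 = s$, $\pi_1^{j+1} = \pi_1^j \cdot \theta_1(\pi_1^j)$ produces a path $\pi_1(\theta_1) \in \Pa(s)$ that is completely independent of whatever $\theta_2$ does. Conversely, every infinite path $\pi \in \Pa(s)$ is realized by the blind strategy that on each prefix of $\pi$ outputs the next transition (extended arbitrarily off this prefix). Hence the map $\theta_1 \mapsto \tr(\pi_1(\theta_1))$ is surjective onto $\Tr(s)$.

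Next, I would fix such a blind $\theta_1$ (and hence a trace $\sigma = \tr(\pi_1(\theta_1)) \in \Tr(s)$) and analyze the inner infimum $\inf_{\theta_2 \in \Strat_2} u(\theta_1, \theta_2)$. Because $\pi_1(\theta_1,\theta_2) = \pi_1(\theta_1)$ is already fixed, only the second component $\pi_2(\theta_1,\theta_2) \in \Pa(t)$ varies with $\theta_2$. Every $\pi_2 \in \Pa(t)$ is attainable by the Player-2 strategy that ignores configurations and plays the transitions of $\pi_2$ in order, and conversely every $\theta_2$ yields some path in $\Pa(t)$. Since $u$ depends on $\theta_2$ only through this path, and $d_T$ depends on the path only through its trace, we get
\begin{equation*}
  \inf_{\theta_2 \in \Strat_2} u(\theta_1,\theta_2) = \inf_{\pi_2 \in \Pa(t)} d_T\bigl(\sigma, \tr(\pi_2)\bigr) = \inf_{\tau \in \Tr(t)} d_T(\sigma,\tau).
\end{equation*}

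Taking the supremum over $\theta_1 \in \Stratblind_1$, and using the surjectivity from the first step, this sup equals $\sup_{\sigma \in \Tr(s)} \inf_{\tau \in \Tr(t)} d_T(\sigma,\tau)$, yielding the claimed formula. The only mildly delicate point is making the correspondence between strategies and paths rigorous, since distinct strategies may produce the same path; the resolution is that all relevant quantities factor through the paths generated, so surjectivity onto $\Pa(s)$ and $\Pa(t)$ is enough and a bijection is not needed.
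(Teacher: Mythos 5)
Your proposal is correct and takes essentially the same route as the paper's (much terser) proof: both rest on the observation that blind Player-1 strategies correspond to paths from $s$ and that, against a fixed blind strategy, Player-2 strategies correspond to paths from $t$, so that both quantifiers factor through traces. Your version merely spells out the surjectivity arguments that the paper leaves implicit.
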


\begin{proof}
  The definition of $\vblind( s, t)$ immediately entails the fact that
  for any $\pi_1 \in \Pa(s)$ there exists $\pi_2\in\Pa(t)$ such that
  $d_T(\tr(\pi_1),\tr(\pi_2)) \le \vblind( s, t)$. It remains to show
  that $\vblind(s,t)\le \sup_{ \sigma\in \Tr( s)} \inf_{ \tau\in \Tr(
    t)} d_T( \sigma, \tau)$.
  By blindness, any $\theta_1\in\Stratblind_1$ produces a unique path
  $\pi_1 \in \Pa(s)$ independent of the opponent strategy
  $\theta_2\in\Strat_2$.  Hence we need only consider strategies
  $\theta_2$ which define a single path $\pi_2$ from $t$, and the
  result follows.
\end{proof}

We finish this section by exposing two properties regarding
\emph{equivalence} of the introduced hemimetrics.  Transferring
(in)equivalence of distances from one setting to another is an
important subject, and we hope to show other results of the below
kind, especially relating trace distance to branching distance, in
future work.

\begin{proposition}[\cf~{\cite[Thm.~3.87]{aliprantis2007infinite}}]
  If trace distances $d_T^1$ and $d_T^2$ are Lipschitz equivalent,
  then the corresponding linear distances $d_L^1$ and $d_L^2$ are
  topologically equivalent.
\end{proposition}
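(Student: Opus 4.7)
The plan is to show the stronger statement that $d_L^1$ and $d_L^2$ are themselves Lipschitz equivalent, with the same constants $m, M$ witnessing the Lipschitz equivalence of the underlying trace distances. The standard fact recalled in Section~\ref{se:metrics}, that Lipschitz equivalence of hemimetrics implies topological equivalence, then finishes the argument.

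The main lever is Theorem~\ref{thm:dl-formula}, which gives the explicit Hausdorff-style formula
\begin{equation*}
  d_L^i( s, t)= \adjustlimits \sup_{ \sigma\in \Tr( s)} \inf_{ \tau\in \Tr( t)} d_T^i( \sigma, \tau)
\end{equation*}
for $i= 1, 2$. Given constants $m, M$ (which we may take non-negative without loss of generality, since $d_T^i$ is non-negative) such that $m\, d_T^1( \sigma, \tau)\le d_T^2( \sigma, \tau)\le M\, d_T^1( \sigma, \tau)$ for all traces $\sigma, \tau$, the goal is to transport these inequalities through the $\sup\inf$. Since multiplication by a non-negative scalar commutes with both $\sup$ and $\inf$ over sets of non-negative extended reals, for each fixed $\sigma\in \Tr( s)$ we have $\inf_{ \tau} d_T^2( \sigma, \tau)\le M \inf_{ \tau} d_T^1( \sigma, \tau)$, and then taking $\sup$ over $\sigma$ gives $d_L^2( s, t)\le M\, d_L^1( s, t)$. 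The symmetric reasoning with the lower Lipschitz bound $m$ yields $m\, d_L^1( s, t)\le d_L^2( s, t)$.

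These two inequalities together establish that $d_L^1$ and $d_L^2$ are Lipschitz equivalent on $S$, and the conclusion of topological equivalence follows by the implication noted in Section~\ref{se:metrics}. I do not anticipate a genuine obstacle: the only subtlety is the routine verification that positive scalars pass through $\sup$ and $\inf$ over $[ 0, \infty]$, and the trivial handling of degenerate cases (e.g.\ $m= 0$, where the lower bound is vacuous, or $M= \infty$, where the upper bound is trivial). No further appeal to the game semantics is needed beyond Theorem~\ref{thm:dl-formula}.
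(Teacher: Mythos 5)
Your proof is correct, and it takes a genuinely different (and more self-contained) route than the paper. The paper simply observes that the formula of Theorem~\ref{thm:dl-formula} is a Hausdorff-style lifting and cites Theorem~3.87 of~\cite{aliprantis2007infinite}, which gives topological equivalence of the lifted distances already under the weaker hypothesis that $d_T^1$ and $d_T^2$ are \emph{uniformly} equivalent. You instead push the Lipschitz constants directly through the $\sup\inf$, which is an elementary two-line computation and yields the \emph{stronger} conclusion that $d_L^1$ and $d_L^2$ are themselves Lipschitz equivalent with the same constants $m, M$; topological equivalence then follows from the standard implication recalled in Section~\ref{se:metrics}. The trade-off is exactly this: your argument needs the full Lipschitz hypothesis but delivers Lipschitz equivalence of the linear distances, whereas the cited theorem needs only uniform equivalence but delivers only topological equivalence. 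One small caveat, which is really an artifact of the paper's definition rather than of your proof: for the final step (Lipschitz $\Rightarrow$ topological equivalence) to be meaningful one must read the definition of Lipschitz equivalence with $m> 0$; your remark that the case $m= 0$ makes the lower bound ``vacuous'' is consistent with the paper's conventions, but in that degenerate case neither your argument nor anyone else's would yield topological equivalence, so it is worth stating explicitly that you take $m> 0$.
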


\begin{proof}
  This follows immediately from Theorem~\ref{thm:dl-formula} and
  Theorem~3.87 in~\cite{aliprantis2007infinite}.  Note that the
  theorem in~\cite{aliprantis2007infinite} actually is stronger; it is
  enough to assume $d_T^1$ and $d_T^2$ to be \emph{uniformly
    equivalent}.
\end{proof}

The next theorem shows that if a trace distance can be used for
measuring trace differences beyond the first symbol (which will be the
case except for some especially trivial trace distances), then the
corresponding linear and branching distances are topologically
inequivalent.  The proof is an easy adaption of the standard argument
for the fact that language inclusion does not imply simulation.

\begin{definition}
  A trace distance $d_T: \K^\omega\times \K^\omega$ is said to be
  \emph{one-step indiscriminate} if $\sigma_0= \tau_0$ implies $d_T(
  \sigma, \tau)= 0$ for all $\sigma, \tau\in \K^\omega$.
\end{definition}

\begin{proposition}
  \label{pr:wts-ineq}
  If $d_T$ is not one-step indiscriminate, then there exists a
  weighted transition system $A$ on which the corresponding distances
  $d_L$ and $d_B$ are topologically inequivalent.
\end{proposition}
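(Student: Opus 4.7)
The plan is to adapt the classical counterexample showing that trace equivalence does not imply simulation, producing a pair of states that already witnesses topological inequivalence of $d_L$ and $d_B$. Since $d_T$ is not one-step indiscriminate, I pick traces $\sigma, \tau \in \K^\omega$ with $\sigma_0 = \tau_0 =: \alpha$ and $d_T(\sigma, \tau) > 0$; by choosing the pair suitably one may also arrange $d_T(\tau, \sigma) > 0$, as holds for any non-degenerate trace distance. I then construct a WTS $A$ with two distinguished states $a$ and $b$ sharing the same trace sets but differing in branching structure: from $a$, two $\alpha$-labeled transitions lead to states heading deterministic infinite chains that produce $\sigma^1$ and $\tau^1$ respectively, while from $b$ a single $\alpha$-labeled transition leads to a state $r$ from which two deterministic chains branch off, producing $\sigma^1$ and $\tau^1$.

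By construction $\Tr(a) = \Tr(b) = \{\sigma, \tau\}$, so Theorem~\ref{thm:dl-formula} immediately yields $d_L(a, b) = d_L(b, a) = 0$. For $d_B(b, a)$ I analyze the game: Player~1 at $b$ has a forced first move to $r$, while Player~2 at $a$ must commit to one of the two $\alpha$-successors and thereby fix her future trace to be $\sigma$ or $\tau$. Observing this commitment, Player~1 then picks the opposite deterministic chain from $r$, producing the trace whose distance to Player~2's trace is maximal; the resulting pay-off is $d_T(\tau, \sigma)$ or $d_T(\sigma, \tau)$ according to Player~2's initial choice. Minimizing over those options, Player~2 attains $\min\{d_T(\sigma, \tau), d_T(\tau, \sigma)\} > 0$, so $d_B(b, a) > 0$ while $d_L(b, a) = 0$.

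Topological inequivalence now follows by a standard open-ball argument. Pick any $\rho$ with $0 < \rho < d_B(b, a)$ and consider the $d_B$-open ball $B_B(b; \rho)$, which excludes $a$. If the topologies generated by $d_L$ and $d_B$ were to coincide, this ball would be $d_L$-open around $b$ and would therefore contain some $B_L(b; \delta)$; but $d_L(b, a) = 0 < \delta$ would then force $a \in B_L(b; \delta) \subseteq B_B(b; \rho)$, a contradiction. The main obstacle lies in the middle step: securing $d_B(b, a) > 0$ hinges on the auxiliary positivity $d_T(\tau, \sigma) > 0$, since otherwise Player~2 can exploit the one-sided zero direction of $d_T$ to escape to distance~$0$. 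For the trace distances of Section~\ref{sec:trace_distances} the auxiliary positivity is easily arranged by selecting appropriate witness traces; in strongly asymmetric edge cases one cascades the above branching pattern to deeper levels, so that Player~2's escape option is closed off iteratively, but the shape of the counterexample and the open-ball conclusion remain unchanged.
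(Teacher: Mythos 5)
Your construction is exactly the paper's: two states with identical trace sets $\{\sigma,\tau\}$, one branching before and one after the common first symbol, giving $d_L=0$ by Theorem~\ref{thm:dl-formula} and $d_B=\min\bigl(d_T(\sigma,\tau),d_T(\tau,\sigma)\bigr)>0$ by the same game analysis, followed by the standard zero-set/open-ball argument. The auxiliary positivity $d_T(\tau,\sigma)>0$ that you flag as the main obstacle is simply \emph{assumed} in the paper's proof as well (it picks $\sigma,\tau$ with both directions positive without deriving this from the non-one-step-indiscriminate hypothesis), so your proof matches the paper's in substance and rigor; your closing ``cascading'' remark for the strongly asymmetric case is unsubstantiated, but it addresses a case the paper does not treat either.
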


\begin{proof}
  Let $\sigma, \tau\in \K^\omega$ such that $\sigma_0= \tau_0$, $d_T(
  \sigma, \tau)> 0$, and $d_T( \tau, \sigma)> 0$.  $A$ is depicted in
  Figure~\ref{fi:wts-ineq}.

  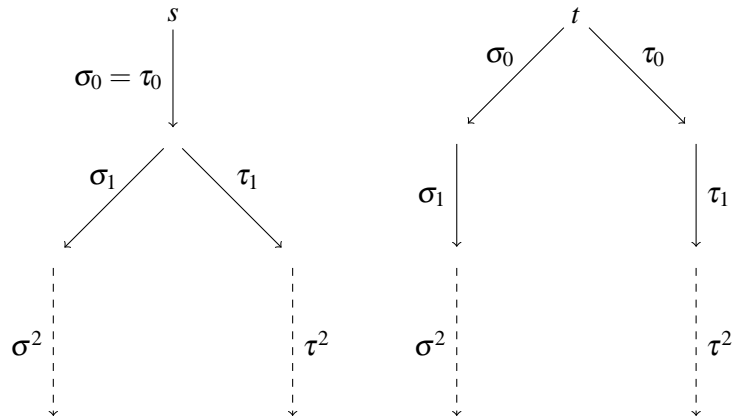
\begin{figure}[tp]
    \centering
    \begin{tikzpicture}[shorten >=1pt, auto, initial text=, scale=.2]
      \begin{scope}[outer sep=1pt,minimum size=5pt,inner sep=2pt, node
        distance=2.25cm]
        \node (v0) {$s$}; 
        \node (v1) [below of=v0, yshift=.6cm] {};
        \node (v2) [below right of=v1] {};
        \node (v3) [below left of=v1] {};
        \node (v4) [below of=v3] {};
        \node (v5) [below of=v2] {};
      \end{scope}
      \begin{scope}[->]
        \draw (v0) -- node[left]  {$\sigma_0 = \tau_0$} (v1);
        \draw (v1) -- node[above right, xshift=-.1cm] {$\tau_1$} (v2);
        \draw (v1) -- node[above left, xshift=.2cm]  {$\sigma_1$} (v3);
        \draw[dashed] (v3) -- node[left] {$\sigma^2$} (v4);
        \draw[dashed] (v2) -- node[right] {$\tau^2$}(v5);
      \end{scope}
    \end{tikzpicture}     
    \qquad
    \begin{tikzpicture}[->,shorten >=1pt, auto, node distance=.75cm,
      initial text=]
      \begin{scope}[outer sep=1pt,minimum
        size=5pt,inner sep=2pt, node distance=2.25cm] 
        \node (u0) {$t$};        
        \node (u1) [below left of=u0]{};
        \node (u2) [below right of=u0]{};
        \node (u3m) [below of=u1,yshift=.6cm] {}; 
        \node (u3) [below of=u3m] {};
        \node (u4m) [below of=u2,yshift=.6cm] {}; 
        \node (u4) [below of=u4m] {};    
      \end{scope}
      \begin{scope}
        \draw (u0) -- node[above,xshift=-.2cm]  {$\sigma_0$} (u1);
        \draw (u0) -- node[above,xshift=.2cm] {$\tau_0$} (u2);
        \draw (u2) -- node[right] {$\tau_1$} (u4m);
        \draw (u1) -- node[left] {$\sigma_1$} (u3m);
        \draw (u3m)[dashed] -- node[left] {$\sigma^2$} (u3);
        \draw (u4m)[dashed] -- node[right] {$\tau^2$} (u4);
      \end{scope}
    \end{tikzpicture}
    \caption{\label{fi:wts-ineq}%
      Weighted transition system for the proof of
      Proposition~\ref{pr:wts-ineq}}
  \end{figure}
  
  We have $\Tr( s)= \Tr(t)$, hence $d_L( s, t)= 0$.  On the other
  hand, $d_B( s, t)=\min\big( d_T( \sigma, \tau), d_T( \tau,
  \sigma)\big)> 0$.  As two equivalent hemimetrics have value $0$ at
  the same set of pairs of points, this finishes the proof.
\end{proof}

Also note that if $\sigma$ and $\tau$ are cyclic, the construction can
be adapted to yield a \emph{finite} WTS $A$.

\section{Recursively Defined Distances}
\label{sec:special}

The game definition of branching distance in
Definition~\ref{de:branch} gives a useful framework, but it is not
very operational.  In this section we show that if the given trace
distance has a recursive characterization, then the corresponding
branching distance can be obtained as the least fixed point of a
similar recursive formula.

We give the fixed-point theorem first and show in
Section~\ref{se:applications} below that the theorem covers all
examples of distances introduced earlier.

\begin{theorem}
  \label{th:F-recursion}
  Let $L$ be a complete lattice and $f: \K^\omega\times \K^\omega\to
  L$, $g: L\to[ 0, \infty]$, $F: \K\times \K\times L\to L$ such that
  $d_T= g\circ f$, $g$ is monotone, $F( x, y, \cdot): L\to L$ is
  monotone for all $x, y\in \K$, and
  \begin{equation}
    \label{eq:F-recursion}
    f( \sigma, \tau)= F\big( \sigma_0, \tau_0, f( \sigma^1,
    \tau^1)\big)
  \end{equation}
  for all $\sigma, \tau\in \K^\omega$.  Define $I: L^{ S\times S}\to
  L^{ S\times S}$ by
  \begin{equation*}
    I( h)( s, t)= \adjustlimits \sup_{ s\tto x s'} \inf_{ t\tto y t'}
    F\big( x, y, h( s', t')\big)
  \end{equation*}
  Then $I$ has a least fixed point $h^*: S\times S\to L$, and $d_B=
  g\circ h^*$.
\end{theorem}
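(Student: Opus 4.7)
The plan is first to use the Knaster--Tarski theorem to produce the least fixed point $h^*$ and then to match $g\circ h^*$ with the game value $d_B$ by introducing a lattice-valued analogue of the game. For existence, I endow $L^{S\times S}$ with the pointwise order, which makes it a complete lattice, and verify that $I$ is monotone: if $h\le h'$ pointwise, then $F(x,y,h(s',t'))\le F(x,y,h'(s',t'))$ for all $x,y,s',t'$ by monotonicity of $F$ in its third argument, and this inequality is preserved by the subsequent $\inf_{t\tto y t'}$ and $\sup_{s\tto x s'}$. Knaster--Tarski then delivers the least fixed point $h^*$.

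For the identity $d_B=g\circ h^*$, I introduce the lattice-valued game value
\[
  V(s,t)=\adjustlimits\sup_{\theta_1\in\Strat_1}\inf_{\theta_2\in\Strat_2}
  f\bigl(\tr(\pi_1(\theta_1,\theta_2)),\tr(\pi_2(\theta_1,\theta_2))\bigr)
\]
and aim to prove both $V=h^*$ and $d_B=g\circ V$. For $V=h^*$, the key step is to split every strategy into a first-move choice together with a family of continuation strategies, so that a strategy profile produces a first pair of moves $s\tto x s'$ and $t\tto y t'$ followed by continuation paths $\pi_1',\pi_2'$ starting at $(s',t')$. Using the recursion $f(\sigma,\tau)=F(\sigma_0,\tau_0,f(\sigma^1,\tau^1))$, the pay-off of a profile becomes $F(x,y,f(\tr(\pi_1'),\tr(\pi_2')))$; the monotonicity of $F(x,y,\cdot)$ then lets me pull the $\sup$ and $\inf$ over continuation strategies into its third argument, which yields $V=I(V)$. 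Combined with an explicit strategy-construction argument showing that no pre-fixed point of $I$ can be strictly below $V$, this identifies $V$ with the least fixed point $h^*$. For $d_B=g\circ V$, I use $d_T=g\circ f$ together with the monotonicity of $g$ to transfer the equality $V=h^*$ through $g$ at the outer $\sup\inf$.

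The main obstacle is precisely that $F$ and $g$ are only assumed monotone and not (Scott-) continuous, so $\sup$ and $\inf$ do not commute with them freely, and neither $V=I(V)$ nor $g(V(s,t))=d_B(s,t)$ can be obtained by purely operator-level manipulations. The argument therefore has to descend to the level of strategies: from the fixed-point equation $h^*(s,t)=\sup_{s\tto x s'}\inf_{t\tto y t'}F(x,y,h^*(s',t'))$ one extracts, for every Player-1 first move, a family of Player-2 replies that realise (or, when the $\sup$/$\inf$ are not attained, approximate) the inner optima, and then composes these replies into a genuine Player-2 strategy in the infinite game, so that $h^*$ serves as a potential bounding $d_B$ from above; the symmetric construction produces a Player-1 strategy bounding $d_B$ from below. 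This composition is where determinacy of the simulation game (as in Proposition~\ref{prop:hemi}) is likely to be needed, and it may be cleanest to carry out via the transfinite iteration $h_0=\bot$, $h_{\alpha+1}=I(h_\alpha)$, $h_\lambda=\sup_{\alpha<\lambda}h_\alpha$, comparing the game truncated at stage $\alpha$ with the iterate $h_\alpha$ and passing to the limit.
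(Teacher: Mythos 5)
Your overall route coincides with the paper's: Knaster--Tarski on $L^{S\times S}$ with the pointwise order for the existence of $h^*$; the introduction of the lattice-valued game value $V$ (the paper calls it $f_B$); the transfer $d_B=g\circ V$ via monotonicity of $g$; the fixed-point property $V=I(V)$ via first-move decomposition of strategies and the recursion \eqref{eq:F-recursion}; and minimality of $V$ via a strategy construction against an arbitrary fixed point of $I$. Up to notation these are exactly the paper's steps, so there is no genuinely different decomposition here.

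The problem is that your proposal stops where the work starts. You first derive $V=I(V)$ by ``pulling the $\sup$ and $\inf$ into the third argument'' of $F$, and then correctly observe that monotonicity of $F(x,y,\cdot)$ and of $g$ only yields one-sided inequalities against $\sup$ and $\inf$, so that neither $V=I(V)$ nor $d_B=g\circ V$ follows from operator-level manipulation alone -- which retracts your own derivation. Having flagged the obstacle, you do not resolve it: the announced strategy-level argument, the appeal to determinacy, and the transfinite iteration $h_0=\bot$, $h_{\alpha+1}=I(h_\alpha)$ are all left at the level of ``is likely to be needed'' and ``may be cleanest'', with no construction actually carried out. In particular the minimality direction, which is the substantive half of the theorem, is only asserted. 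For comparison, the paper's argument there is concrete and needs neither determinacy nor transfinite induction: given any fixed point $\bar h$ of $I$ and any $\theta_1\in\Strat_1$, Player~2 answers each move $s_j\tto{x}s_{j+1}$ with a transition $t_j\tto{y}t_{j+1}$ chosen so that $F\big(x,y,\bar h(s_{j+1},t_{j+1})\big)\le I(\bar h)(s_j,t_j)=\bar h(s_j,t_j)$; unrolling \eqref{eq:F-recursion} along the two resulting infinite paths and using monotonicity of $F$ in its third argument gives $f\big(\tr(\pi_1),\tr(\pi_2)\big)\le\bar h(s,t)$, hence $\inf_{\theta_2}f\le\bar h(s,t)$ for every $\theta_1$ and so $V\le\bar h$. (The fixed-point property and the transfer through $g$ are handled in the paper by precisely the commutation steps you distrust; if you want to tighten them you should either add a continuity hypothesis or replace them by two-sided strategy-splitting inequalities -- but as submitted, your proof of the central identity $d_B=g\circ h^*$ is a plan rather than a proof.)
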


Let us give some intuition about the theorem before we prove it.  Note
first the composition $d_T= g\circ f$, where $f$ maps pairs of traces
to the lattice $L$ which will act as \emph{memory} in the applications
below.  Equation~\eqref{eq:F-recursion} then expresses that $F$ acts
as a \emph{distance iterator function} which, within the lattice
domain, computes the trace distance by looking at the first elements
in the traces and then iterating over the rest of the trace.  Under
the premises of the theorem then, branching distance is the projection
by $g$ of the least fixed of a similar recursive function involving
$F$.

\begin{proof}
  It is not difficult to show that $I$ indeed has a least fixed point:
  The lattice $L^{ S\times S}$ with partial order defined point-wise
  by $h_1\le h_2$ iff $h_1( s, t)\le h_2( s, t)$ for all $s, t\in S$
  is complete, and $I$ is monotone because of the monotonicity
  condition on $F$, hence Tarski's Fixed-point Theorem can be applied.

  To show that $d_B= g\circ h^*$, we pull back $d_B$ along $g$: With
  the notation for the simulation game from Section~\ref{se:wsg},
  define
  \begin{equation*}
    f_B( s, t)= \adjustlimits \sup_{ \theta_1\in \Strat_1} \inf_{
      \theta_2\in \Strat_2} f\big( \tr( \pi_1( \theta_1, \theta_2)),
    \tr( \pi_2( \theta_1, \theta_2))\big)
  \end{equation*}
  We have $d_B= g\circ f_B$ by monotonicity of $g$, so it will suffice
  to show that $f_B= h^*$.

  Let us first prove that $f_B$ is a fixed point for $I$: Let $s, t\in
  S$, then
  \begin{align*}
    I( f_B)( s, t) &=%
    \adjustlimits \sup_{ s\tto x s'} \inf_{ t\tto y t'} F\big( x, y,
    f_B( s', t')\big)\\ &=%
    \adjustlimits \sup_{ s\tto x s'} \inf_{ t\tto y t'} F\big( x, y,
    \adjustlimits \sup_{ \theta_1'\in \Strat_1'} \inf_{ \theta_2'\in
      \Strat_2'} f( \tr( \pi_1( \theta_1', \theta_2')), \tr( \pi_2(
    \theta_1', \theta_2')))\big)\\ &=%
    \adjustlimits \sup_{ s\tto x s'} \inf_{ t\tto y t'} \adjustlimits
    \sup_{ \theta_1'\in \Strat_1'} \inf_{ \theta_2'\in \Strat_2'} F\big(
    x, y, f( \tr( \pi_1( \theta_1', \theta_2')), \tr( \pi_2(
    \theta_1', \theta_2')))\big)
  \end{align*}
  (the last step by the monotonicity assumption on $F$; note that in
  the second sup-inf pair, strategies from $s'$ and $t'$ are
  considered).  By the recursion formula~\eqref{eq:F-recursion} for
  $F$, we end up with
  \begin{equation*}
    I( f_B)( s, t) =%
    \adjustlimits \sup_{ s\tto x s'} \inf_{ t\tto y t'} \adjustlimits
    \sup_{ \theta_1'\in \Strat_1'} \inf_{ \theta_2'\in \Strat_2'} f(
    x\cdot \tr( \pi_1( \theta_1', \theta_2')), y\cdot \tr(
    \pi_2( \theta_1', \theta_2')))
  \end{equation*}
  Now because of independence of choices, we can rewrite this to
  \begin{equation*}
    I( f_B)( s, t) =%
    \sup_{ s\tto x s'} \sup_{ \theta_1'\in \Strat_1'} \inf_{ t\tto y
      t'} \inf_{ \theta_2'\in \Strat_2'} f(
    x\cdot \tr( \pi_1( \theta_1', \theta_2')), y\cdot \tr(
    \pi_2( \theta_1', \theta_2')))
  \end{equation*}
  and collapsing the sup-sup and inf-inf into one sup and inf,
  respectively, %
  conclude $I( f_B)= f_B$.

  To show that $f_B$ is the least fixed point for $I$, let $\bar h:
  S\times S\to L$ such that $I( \bar h)= \bar h$; we want to prove
  $f_B\le \bar h$.  Note first that for all $s, t\in S$ and all $s\tto
  x s'$, there is $t\tto y t'$ such that $F( x, y, \bar h( s', t'))\le
  I( \bar h)( s, t)$.  Now fix $s, t\in S$ and let $\theta_1\in
  \Strat_1$; we will be done once we can construct a Player-2 strategy
  $\theta_2\in \Strat_2$ for which $f( \tr( \pi_1( \theta_1,
  \theta_2)), \tr( \pi_2( \theta_1, \theta_2)))\le \bar h( s, t)$.

  We have to define $\theta_2$ for configurations $( \pi_1', \pi_2)\in
  \fPa( s)\times \fPa( t)$ in which $\pi_1'= \pi_1\cdot( s_j, x, s_{
    j+ 1})$ and $\len( \pi_1)= \len( \pi_2)$.  Write $\last( \pi_2)=
  t_j$; by the note above, we can choose a transition $t_j\tto y t_{
    j+ 1}$ for which $F( x, y, \bar h( s', t'))\le I( \bar h)( s, t)$,
  so we let $\theta_2( \pi_1', \pi_2)=( t_j, y, t_{ j+ 1})$.  The
  so-defined strategy has
  \begin{equation*}
    f( \tr( \pi_1( \theta_1, \theta_2)), \tr(
    \pi_2( \theta_1, \theta_2)))\le \adjustlimits \sup_{s\tto{x}s'}
    \inf_{t\tto{y}t'}F( x, y, \bar h( s', t'))= \bar h( s, t)
  \end{equation*}
\end{proof}

\subsection{Applications}
\label{se:applications}

We reconsider here the example trace distances from
Section~\ref{sec:trace_distances} and exhibit the corresponding linear
and branching distances.

\paragraph{Discrete trace distances.}

For the discrete trace distance on $\K^\omega$ given by $d_T( \sigma,
\tau)= 0$ if $\sigma= \tau$ and $d_T( \sigma, \tau)= \infty$
otherwise, we saw already in Section~\ref{se:discrete} that we recover
ordinary trace inclusion and simulation.  For linear distance, we can
also use Theorem~\ref{thm:dl-formula} to show that $d_L( s, t)= 0$ if
$\Tr( s)\subseteq \Tr( t)$ and $d_L( s, t)= \infty$ otherwise.

For the branching distance, we can now also apply
Theorem~\ref{th:F-recursion} with $L=[ 0, \infty]$, $g$ the identity
mapping, and $F( x, y, z)= z$ if $x= y$, $F( x, y, z)= \infty$
otherwise.  Then the branching distance is the least fixed point of
the equations $d_B( s, t)= \sup_{ s\tto x s'} \inf_{ t\tto x t'} d_B(
s', t')$, hence $d_B( s, t)= 0$ if $t$ simulates $s$ in the standard
sense~\cite{milner89}, and $d_B( s, t)= \infty$ otherwise.

For the refined discrete trace distance $d_T( \sigma, \tau)= 0$ if
$\sigma_j\sqsubseteq \tau_j$ for all $j$, $d_T( \sigma, \tau)= \infty$
otherwise, we analogously get $d_L( s, t)= 0$ if all $\sigma\in \Tr(
s)$ can be refined by a $\tau\in \Tr( t)$ (\ie~$\sigma_j\sqsubseteq
\tau_j$ for all $j$) and $d_L( s, t)= \infty$ otherwise.  Also, $d_B(
s, t)= 0$ if there is a relation $R\subseteq S\times S$ for which $(
s, t)\in R$ and whenever $( s', t')\in R$ and $s'\tto x s''$, then
also $t'\tto y t''$ with $x\sqsubseteq y$ and $( s'', t'')\in R$ (the
\emph{extended simulation} of~\cite{Thomsen87}), and $d_B( s, t)=
\infty$ otherwise.

\paragraph{Hamming distance.}

For Hamming distance induced by the metric $d( x, y)= 0$ if $x= y$ and
$d( x, y)= 1$ otherwise on $\K$, linear distance is given by $d_L( s,
t)\le k$ if and only if any trace $\sigma\in \Tr( s)$ can be matched
by a trace $\tau\in \Tr( t)$ with Hamming distance at most $k$, both
for the limit-average and the discounting interpretation.  The
branching distance associated with the discounting version is
precisely the (discounted) \emph{correctness distance}
of~\cite{conf/concur/CernyHR10}: $d_B( s, t)$ measures ``how often
[the system starting in $s_2$] can be forced to cheat'', \ie~to take a
transition different from the one the system starting in $s_1$ takes.

\paragraph{Labeled weighted transition systems.}

For the trace distances on labeled weighted transition systems, let us
for simplicity assume that $| \Sigma|= 1$, hence $\K= \Real$.  For the
point-wise trace distance $d_T( \sigma, \tau)= \sup_j| \sigma_j-
\tau_j|$ we can derive a recursive formula for the corresponding
branching distance by applying Theorem~\ref{th:F-recursion} with $L=[ 0,
\infty]$, $g$ the identity mapping, and $F( x, y, z)=\max\big(| x- y|,
z)$.  Then $d_B$ is the least fixed point to the equations
\begin{equation*}
  d_B( s, t)= \adjustlimits \sup_{ s\tto x s'} \inf_{ t\tto y
    t'} \max\big(| x- y|, d_B( s', t')\big)
\end{equation*}

For discounted accumulated trace distance $d_T( \sigma, \tau)= \sum_j
\lambda^j| \sigma_j- \tau_j|$, we can similarly let $F( x, y, z)=|
x- y|+ \lambda z$, then the corresponding branching distance is the
least fixed point to the equations
\begin{equation*}
  d_B( s, t)= \adjustlimits \sup_{ s\tto x s'} \inf_{ t\tto y
    t'} | x- y|+ \lambda d_B( s', t')
\end{equation*}
Note that these two branching distances are exactly the ones the
authors define in~\cite{journals/jlap/ThraneFL10}.

For the maximum-lead distance $d_T( \sigma, \tau)= \sup_j \bigl|
\sum_{i= 0}^j \sigma_i - \sum_{i= 0}^j \tau_i\bigr|$, we need to do
more work.  Intuitively, a recursive formulation needs to keep track
of the accumulated delay, hence needs (infinite) memory.  This can be
accomplished by letting $L=[ 0, \infty]^{[ -\infty, \infty]}$; the set
of functions from leads to distances.  We can then define
\begin{equation*}
  f( \sigma,
  \tau)( \delta)= \max\big(| \delta|, \sup_{ j= 0}^\infty| \delta+ \sum_{ i=
    0}^j \sigma_j- \sum_{ i= 0}^j \tau_j|\big)
\end{equation*}
and $g( h)= h( 0)$. Now with $F( x, y, h)( \delta)= \max\big(| \delta+
x- y|, h( \delta+ x- y)\big)$, we indeed have that $f( \sigma, \tau)=
F\big( \sigma_0, \tau_0, f( \sigma^1, \tau^1)\big)$, hence we can
apply Theorem~\ref{th:F-recursion} to conclude that $d_B( s, t)= h^*(
s, t)( 0)$, where $h^*$ is the least fixed point to the equations
\begin{equation*}
  h( s, t)( \delta)= \adjustlimits \sup_{ s\tto x s'} \inf_{ t\tto y
    t'} \max\big(| \delta+ x- y|, h( s', t')( \delta+ x- y)\big)
\end{equation*}
This is precisely the formulation of branching maximum-lead distance
given in~\cite{conf/formats/2005/HenzM05}.

\section{Conclusion and Future Work}

We have shown that simulation games with quantitative objectives
provide a general framework for studying linear and branching
distances for quantitative systems.  
Specifically, that our framework covers and unifies a number of
previously distinct approaches, and that certain common special cases
lead to useful recursive characterizations of branching distance.

Already we have seen that one very general property, topological
inequivalence of linear and branching distance, follows almost
immediately from the game characterization. Also this general approach
permits the conclusion that independent of the trace distance, the
branching distance provides an upper bound on the linear distance, a
property which is useful for applications such as analysis of
real-time systems, where linear distances are known to be
uncomputable~\cite{journals/jlap/ThraneFL10}.

It seems likely that by permitting a broader range of strategies, we
may encompass more advanced levels of system interaction and
observations, and hence capture quantitative extensions of other
well-known system relations such as 2-nested
simulation~\cite{DBLP:journals/iandc/GrooteV92,DBLP:conf/stacs/AcetoFI01}
or bisimulation~\cite{milner89}.  Thus, we expect our framework to be
of great use for reasoning about, and applying quantitative
verification.

The game perspective on linear and branching distances also suggests
that several interesting results and properties of games with
quantitative objectives are transferable to our setting.  As an
example, one may consider computability and complexity results: For a
concrete setting such as finite weighted labeled automata, discounted
or limit average accumulating distances can be computed using
discounted and mean-payoff games, respectively.  Hence the complexity
of computing these branching distances is in
$\textsf{NP}\cap\textsf{coNP}$.
Similarly, results concerning strategy iteration or value iteration
for games with quantitative objectives may be transferred to the
distance setting.

\bibliographystyle{eptcs}
\bibliography{linearbranchingbib}

\begin{thebibliography}{10}
\providecommand{\bibitemdeclare}[2]{}
\providecommand{\urlprefix}{Available at }
\providecommand{\url}[1]{\texttt{#1}}
\providecommand{\href}[2]{\texttt{#2}}
\providecommand{\urlalt}[2]{\href{#1}{#2}}
\providecommand{\doi}[1]{doi:\urlalt{http://dx.doi.org/#1}{#1}}
\providecommand{\bibinfo}[2]{#2}

\bibitemdeclare{inproceedings}{DBLP:conf/stacs/AcetoFI01}
\bibitem{DBLP:conf/stacs/AcetoFI01}
\bibinfo{author}{Luca Aceto}, \bibinfo{author}{Wan Fokkink} \&
  \bibinfo{author}{Anna Ing{\'o}lfsd{\'o}ttir} (\bibinfo{year}{2001}):
  \emph{\bibinfo{title}{2-Nested Simulation Is Not Finitely Equationally
  Axiomatizable}}.
\newblock In \bibinfo{editor}{Afonso Ferreira} \& \bibinfo{editor}{Horst
  Reichel}, editors: {\sl \bibinfo{booktitle}{Proc. STACS'01}}, {\sl
  \bibinfo{series}{Lecture Notes in Computer Science}} \bibinfo{volume}{2010},
  \bibinfo{publisher}{Springer}, pp. \bibinfo{pages}{39--50},
  \doi{10.1007/3-540-44693-1\_4}.

\bibitemdeclare{article}{conf/icalp/AlfaroFS04}
\bibitem{conf/icalp/AlfaroFS04}
\bibinfo{author}{Luca de~Alfaro}, \bibinfo{author}{Marco Faella} \&
  \bibinfo{author}{Mari{\"e}lle Stoelinga} (\bibinfo{year}{2009}):
  \emph{\bibinfo{title}{Linear and Branching System Metrics}}.
\newblock {\sl \bibinfo{journal}{IEEE Trans. Software Eng.}}
  \bibinfo{volume}{35}(\bibinfo{number}{2}), pp. \bibinfo{pages}{258--273},
  \doi{10.1109/TSE.2008.106}.

\bibitemdeclare{inproceedings}{conf/icalp/AlfaroHM03}
\bibitem{conf/icalp/AlfaroHM03}
\bibinfo{author}{Luca de~Alfaro}, \bibinfo{author}{Thomas~A. Henzinger} \&
  \bibinfo{author}{Rupak Majumdar} (\bibinfo{year}{2003}):
  \emph{\bibinfo{title}{Discounting the Future in {S}ystems Theory}}.
\newblock In: {\sl \bibinfo{booktitle}{Proc. ICALP'03}}, {\sl
  \bibinfo{series}{{Lecture Notes in Computer Science}}}
  \bibinfo{volume}{2719}, \bibinfo{publisher}{{Springer-Verlag}}, pp.
  \bibinfo{pages}{1022--1037}, \doi{10.1007/3-540-45061-0\_79}.

\bibitemdeclare{book}{aliprantis2007infinite}
\bibitem{aliprantis2007infinite}
\bibinfo{author}{Charalambos~D. Aliprantis} \& \bibinfo{author}{Kim~C. Border}
  (\bibinfo{year}{2007}): \emph{\bibinfo{title}{Infinite Dimensional Analysis:
  A Hitchhiker's Guide}}, \bibinfo{edition}{3rd} edition.
\newblock \bibinfo{publisher}{{Springer-Verlag}}.

\bibitemdeclare{article}{DBLP:journals/tcs/Bodlaender93}
\bibitem{DBLP:journals/tcs/Bodlaender93}
\bibinfo{author}{Hans~L. Bodlaender} (\bibinfo{year}{1993}):
  \emph{\bibinfo{title}{Complexity of Path-Forming Games}}.
\newblock {\sl \bibinfo{journal}{{Theoretical Computer Science}}}
  \bibinfo{volume}{110}(\bibinfo{number}{1}), pp. \bibinfo{pages}{215--245},
  \doi{10.1016/0304-3975(93)90357-Y}.

\bibitemdeclare{inproceedings}{DBLP:conf/formats/BouyerFLMS08}
\bibitem{DBLP:conf/formats/BouyerFLMS08}
\bibinfo{author}{Patricia Bouyer}, \bibinfo{author}{Uli Fahrenberg},
  \bibinfo{author}{Kim~G. Larsen}, \bibinfo{author}{Nicolas Markey} \&
  \bibinfo{author}{Ji\v{r}\'{\i} Srba} (\bibinfo{year}{2008}):
  \emph{\bibinfo{title}{Infinite Runs in Weighted Timed Automata with Energy
  Constraints}}.
\newblock In: {\sl \bibinfo{booktitle}{Proc. FORMATS'08}}, {\sl
  \bibinfo{series}{{Lecture Notes in Computer Science}}}
  \bibinfo{volume}{5215}, \bibinfo{publisher}{{Springer-Verlag}}, pp.
  \bibinfo{pages}{33--47}, \doi{10.1007/978-3-540-85778-5\_4}.

\bibitemdeclare{inproceedings}{DBLP:conf/concur/Breugel05}
\bibitem{DBLP:conf/concur/Breugel05}
\bibinfo{author}{Franck van Breugel} (\bibinfo{year}{2005}):
  \emph{\bibinfo{title}{A Behavioural Pseudometric for Metric Labelled
  Transition Systems}}.
\newblock In: {\sl \bibinfo{booktitle}{Proc. CONCUR'05}}, {\sl
  \bibinfo{series}{{Lecture Notes in Computer Science}}}
  \bibinfo{volume}{3653}, \bibinfo{publisher}{{Springer-Verlag}}, pp.
  \bibinfo{pages}{141--155}, \doi{10.1007/11539452\_14}.

\bibitemdeclare{inproceedings}{conf/concur/CernyHR10}
\bibitem{conf/concur/CernyHR10}
\bibinfo{author}{Pavol {\v C}ern{\'y}}, \bibinfo{author}{Thomas~A. Henzinger}
  \& \bibinfo{author}{Arjun Radhakrishna} (\bibinfo{year}{2010}):
  \emph{\bibinfo{title}{Simulation Distances}}.
\newblock In: {\sl \bibinfo{booktitle}{Proc. CONCUR'10}}, {\sl
  \bibinfo{series}{{Lecture Notes in Computer Science}}}
  \bibinfo{volume}{6269}, \bibinfo{publisher}{{Springer-Verlag}}, pp.
  \bibinfo{pages}{253--268}, \doi{10.1007/978-3-642-15375-4\_18}.

\bibitemdeclare{article}{conf/csl/ChatterjeeDH08}
\bibitem{conf/csl/ChatterjeeDH08}
\bibinfo{author}{Krishnendu Chatterjee}, \bibinfo{author}{Laurent Doyen} \&
  \bibinfo{author}{Thomas~A. Henzinger} (\bibinfo{year}{2010}):
  \emph{\bibinfo{title}{Quantitative languages}}.
\newblock {\sl \bibinfo{journal}{ACM Trans. Comput. Log.}}
  \bibinfo{volume}{11}(\bibinfo{number}{4}), \doi{10.1145/1805950.1805953}.

\bibitemdeclare{article}{journals/tcs/DesharnaisGJP04}
\bibitem{journals/tcs/DesharnaisGJP04}
\bibinfo{author}{Josee Desharnais}, \bibinfo{author}{Vineet Gupta},
  \bibinfo{author}{Radha Jagadeesan} \& \bibinfo{author}{Prakash Panangaden}
  (\bibinfo{year}{2004}): \emph{\bibinfo{title}{Metrics for labelled {M}arkov
  processes}}.
\newblock {\sl \bibinfo{journal}{{Theoretical Computer Science}}}
  \bibinfo{volume}{318}(\bibinfo{number}{3}), pp. \bibinfo{pages}{323--354},
  \doi{10.1016/j.tcs.2003.09.013}.

\bibitemdeclare{inproceedings}{DBLP:conf/qest/DesharnaisLT08}
\bibitem{DBLP:conf/qest/DesharnaisLT08}
\bibinfo{author}{Jos{\'e}e Desharnais}, \bibinfo{author}{Fran\c{c}ois
  Laviolette} \& \bibinfo{author}{Mathieu Tracol} (\bibinfo{year}{2008}):
  \emph{\bibinfo{title}{Approximate Analysis of Probabilistic Processes: Logic,
  Simulation and Games}}.
\newblock In: {\sl \bibinfo{booktitle}{Proc. QEST'08}},
  \bibinfo{publisher}{IEEE Computer Society}, pp. \bibinfo{pages}{264--273},
  \doi{10.1109/QEST.2008.42}.

\bibitemdeclare{article}{FahrenbergLT10}
\bibitem{FahrenbergLT10}
\bibinfo{author}{Uli Fahrenberg}, \bibinfo{author}{Kim~G. Larsen} \&
  \bibinfo{author}{Claus Thrane} (\bibinfo{year}{2010}):
  \emph{\bibinfo{title}{A Quantitative Characterization of Weighted {K}ripke
  Structures in Temporal Logic}}.
\newblock {\sl \bibinfo{journal}{Computing and Informatics}}
  \bibinfo{volume}{29}(\bibinfo{number}{6+}), pp. \bibinfo{pages}{1311--1324}.

\bibitemdeclare{booklet}{Ferguson}
\bibitem{Ferguson}
\bibinfo{author}{Thomas~S. Ferguson}: \emph{\bibinfo{title}{Game Theory}}.
\newblock
  \bibinfo{howpublished}{\url{http://www.math.ucla.edu/~tom/Game_Theory/}}.

\bibitemdeclare{article}{DBLP:journals/tcs/FraenkelS93}
\bibitem{DBLP:journals/tcs/FraenkelS93}
\bibinfo{author}{Aviezri~S. Fraenkel} \& \bibinfo{author}{Shai Simonson}
  (\bibinfo{year}{1993}): \emph{\bibinfo{title}{Geography}}.
\newblock {\sl \bibinfo{journal}{{Theoretical Computer Science}}}
  \bibinfo{volume}{110}(\bibinfo{number}{1}), pp. \bibinfo{pages}{197--214},
  \doi{10.1016/0304-3975(93)90356-X}.

\bibitemdeclare{inproceedings}{Giacalone90}
\bibitem{Giacalone90}
\bibinfo{author}{Alessandro Giacalone}, \bibinfo{author}{Chi-chang Jou} \&
  \bibinfo{author}{Scott~A. Smolka} (\bibinfo{year}{1990}):
  \emph{\bibinfo{title}{Algebraic Reasoning for Probabilistic Concurrent
  Systems}}.
\newblock In: {\sl \bibinfo{booktitle}{Proc. IFIP TC2 Working Conference on
  Programming Concepts and Methods}}, \bibinfo{publisher}{North-Holland}, pp.
  \bibinfo{pages}{443--458}.

\bibitemdeclare{incollection}{Glabbeek01-lbs}
\bibitem{Glabbeek01-lbs}
\bibinfo{author}{Rob~J. van Glabbeek} (\bibinfo{year}{2001}):
  \emph{\bibinfo{title}{The Linear Time -- Branching Time Spectrum {I}}}.
\newblock In \bibinfo{editor}{Jan~A. Bergstra}, \bibinfo{editor}{Alban Ponse}
  \& \bibinfo{editor}{Scott~A. Smolka}, editors: {\sl
  \bibinfo{booktitle}{Handbook of Process Algebra}},
  \bibinfo{type}{{C}hapter}~\bibinfo{chapter}{1},
  \bibinfo{publisher}{Elsevier}, pp. \bibinfo{pages}{3--99}.

\bibitemdeclare{article}{DBLP:journals/iandc/GrooteV92}
\bibitem{DBLP:journals/iandc/GrooteV92}
\bibinfo{author}{Jan~Friso Groote} \& \bibinfo{author}{Frits~W. Vaandrager}
  (\bibinfo{year}{1992}): \emph{\bibinfo{title}{Structured Operational
  Semantics and Bisimulation as a Congruence}}.
\newblock {\sl \bibinfo{journal}{Inf. Comput.}}
  \bibinfo{volume}{100}(\bibinfo{number}{2}), pp. \bibinfo{pages}{202--260},
  \doi{10.1016/0890-5401(92)90013-6}.

\bibitemdeclare{article}{Hamming50}
\bibitem{Hamming50}
\bibinfo{author}{Richard~W. Hamming} (\bibinfo{year}{1950}):
  \emph{\bibinfo{title}{Error Detecting and Error Correcting Codes}}.
\newblock {\sl \bibinfo{journal}{Bell System Technical Journal}}
  \bibinfo{volume}{29}, pp. \bibinfo{pages}{147--160}.

\bibitemdeclare{inproceedings}{conf/formats/2005/HenzM05}
\bibitem{conf/formats/2005/HenzM05}
\bibinfo{author}{Thomas~A. Henzinger}, \bibinfo{author}{Rupak Majumdar} \&
  \bibinfo{author}{Vinayak Prabhu} (\bibinfo{year}{2005}):
  \emph{\bibinfo{title}{Quantifying Similarities Between Timed Systems}}.
\newblock In: {\sl \bibinfo{booktitle}{Proc. FORMATS'05}}, {\sl
  \bibinfo{series}{{Lecture Notes in Computer Science}}}
  \bibinfo{volume}{3829}, \bibinfo{publisher}{{Springer-Verlag}}, pp.
  \bibinfo{pages}{226--241}, \doi{10.1007/11603009\_18}.

\bibitemdeclare{inproceedings}{DBLP:conf/fm/HenzingerS06}
\bibitem{DBLP:conf/fm/HenzingerS06}
\bibinfo{author}{Thomas~A. Henzinger} \& \bibinfo{author}{Joseph Sifakis}
  (\bibinfo{year}{2006}): \emph{\bibinfo{title}{The Embedded Systems Design
  Challenge}}.
\newblock In: {\sl \bibinfo{booktitle}{Proc. FM'06}}, {\sl
  \bibinfo{series}{{Lecture Notes in Computer Science}}}
  \bibinfo{volume}{4085}, \bibinfo{publisher}{{Springer-Verlag}}, pp.
  \bibinfo{pages}{1--15}, \doi{10.1007/11813040\_1}.

\bibitemdeclare{inproceedings}{DBLP:conf/stoc/Kozen83}
\bibitem{DBLP:conf/stoc/Kozen83}
\bibinfo{author}{Dexter Kozen} (\bibinfo{year}{1983}): \emph{\bibinfo{title}{A
  Probabilistic PDL}}.
\newblock In: {\sl \bibinfo{booktitle}{Proc. STOC'83}},
  \bibinfo{publisher}{ACM}, pp. \bibinfo{pages}{291--297},
  \doi{10.1145/800061.808758}.

\bibitemdeclare{article}{axiomat}
\bibitem{axiomat}
\bibinfo{author}{Kim~G. Larsen}, \bibinfo{author}{Uli Fahrenberg} \&
  \bibinfo{author}{Claus Thrane} (\bibinfo{year}{2011}):
  \emph{\bibinfo{title}{Metrics for Weighted Transition Systems: Axiomatization
  and Complexity}}.
\newblock {\sl \bibinfo{journal}{{Theoretical Computer Science}}}
  \doi{10.1016/j.tcs.2011.04.003}.
\newblock \bibinfo{note}{To appear}.

\bibitemdeclare{article}{DBLP:journals/dke/MedeirosAW08}
\bibitem{DBLP:journals/dke/MedeirosAW08}
\bibinfo{author}{Ana Karla~Alves de~Medeiros}, \bibinfo{author}{Wil M.~P.
  van~der Aalst} \& \bibinfo{author}{A.~J. M.~M. Weijters}
  (\bibinfo{year}{2008}): \emph{\bibinfo{title}{Quantifying process equivalence
  based on observed behavior}}.
\newblock {\sl \bibinfo{journal}{Data \& Knowledge Engineering}}
  \bibinfo{volume}{64}(\bibinfo{number}{1}), pp. \bibinfo{pages}{55--74},
  \doi{10.1016/j.datak.2007.06.010}.

\bibitemdeclare{book}{milner89}
\bibitem{milner89}
\bibinfo{author}{Robin Milner} (\bibinfo{year}{1989}):
  \emph{\bibinfo{title}{Communication and Concurrency}}.
\newblock \bibinfo{publisher}{Prentice Hall}.

\bibitemdeclare{inproceedings}{DBLP:conf/banff/Stirling95}
\bibitem{DBLP:conf/banff/Stirling95}
\bibinfo{author}{Colin Stirling} (\bibinfo{year}{1995}):
  \emph{\bibinfo{title}{Modal and Temporal Logics for Processes}}.
\newblock In: {\sl \bibinfo{booktitle}{Proc. Banff Higher Order Workshop}},
  {\sl \bibinfo{series}{{Lecture Notes in Computer Science}}}
  \bibinfo{volume}{1043}, \bibinfo{publisher}{{Springer-Verlag}}, pp.
  \bibinfo{pages}{149--237}.

\bibitemdeclare{mastersthesis}{Thomsen87}
\bibitem{Thomsen87}
\bibinfo{author}{Bent Thomsen} (\bibinfo{year}{1987}): \emph{\bibinfo{title}{An
  Extended Bisimulation Induced by a Preorder on Actions}}.
\newblock Master's thesis, \bibinfo{school}{Aalborg University Centre}.

\bibitemdeclare{article}{journals/jlap/ThraneFL10}
\bibitem{journals/jlap/ThraneFL10}
\bibinfo{author}{Claus Thrane}, \bibinfo{author}{Uli Fahrenberg} \&
  \bibinfo{author}{Kim~G. Larsen} (\bibinfo{year}{2010}):
  \emph{\bibinfo{title}{Quantitative analysis of weighted transition systems}}.
\newblock {\sl \bibinfo{journal}{Journal of Logic and Algebraic Programming}}
  \bibinfo{volume}{79}(\bibinfo{number}{7}), pp. \bibinfo{pages}{689--703},
  \doi{10.1016/j.jlap.2010.07.010}.

\end{thebibliography}

\end{document}